\newcommand{\ii}{\mathrm{i}}
\newcommand{\de}{\partial}
\newcommand{\magn}{\Delta_\mathbf{A}}
\newcommand{\magnh}{\text{H}}
\begin{document}

\title*{Remarks on the derivation of Gross-Pitaevskii equation with magnetic Laplacian}
\authorrunning{Alessandro Olgiati}
\author{Alessandro Olgiati,\\$\,$ \footnotesize\\SISSA - International School for Advanced Studies,\\ via Bonomea 265, Trieste, Italy\\\email{aolgiati@sissa.it}}
%
%
\maketitle

\abstract*{The effective dynamics for a Bose-Einstein condensate in the regime of high dilution and subject to an external magnetic field is governed by a magnetic Gross-Pitaevskii equation. We elucidate the steps needed to adapt to the magnetic case the proof of the derivation of the Gross-Pitaevskii equation within the ``projection counting'' scheme.}

\abstract{The effective dynamics for a Bose-Einstein condensate in the regime of high dilution and subject to an external magnetic field is governed by a magnetic Gross-Pitaevskii equation. We elucidate the steps needed to adapt to the magnetic case the proof of the derivation of the Gross-Pitaevskii equation within the ``projection counting'' scheme.}

\section{Introduction and result}

The purpose of this note is to provide explicitly the non trivial adaptations of the known result \cite{p-external2015} which are needed to prove the derivation of the so-called time-dependent magnetic Gross-Pitaevskii equation from the many-body Schr\"odinger dynamics of a dilute gas of identical bosons subject to an external magnetic field. The presentation is therefore somewhat technical; nonetheless, since, to our knowledge, no explicit details were so far available in the literature, we propose it as a reference for the increasingly interesting topic of the effective many-body quantum dynamics with magnetic field.

The rigorous derivation of the Gross-Pitaevskii equation has been over the last two decades a central topic in the mathematics of the Bose gas; in its essence, it is a problem of persistence of condensation, or propagation of chaos, in the following sense. Suppose that the initial datum of a three dimensional Bose gas displays condensation onto a one-body state $u_0\in L^2(\mathbb{R}^3)$, namely
\[
\lim_{N\rightarrow\infty}\gamma^{(1)}_{N,0}=|u_0\rangle\langle u_0|,
\]
where $\gamma^{(1)}_{N,0}$ is the one-particle reduced density matrix associated to the initial datum $\psi_{N,0}$. Then condensation persists up to some time $T$ if 
\[
\lim_{N\rightarrow\infty}\gamma^{(1)}_{N,t}=|u_t\rangle\langle u_t|,\quad\forall t\in[0,T],
\]
for a condensate wave-function $u\equiv u_t(x)$ solution to the Gross-Pitaevskii equation
\[
\ii \de_t u= -\Delta u+8\pi a |u|^2 u
\]
with initial datum $u_0$. Here $a$ is the scattering length of the pair interaction among the particles of the many-body system.

The first complete proof of a result of this type is due to Erd\H{o}s, Schlein, and Yau in 2006 (see \cite{esy-2007} and \cite{esy-2010}); it was later reproduced with different methods by Pickl \cite{p-external2015}, by Benedikter, de Oliveira, and Schlein \cite{bdos-quant2012}, and by Brennecke and Schlein \cite{bs-2017}. All such derivations deal with a system of $N$ interacting bosons in the Gross-Pitaevskii scaling limit with non-relativistic kinetic operator given by $-\Delta$; this corresponds to a many-body Hamiltonian of the form
\[
H_N=\sum_{i=1}^N(-\Delta_i)+\sum_{i<j}N^2V(N(x_i-x_j)).
\]
Such methods can be adapted if the one-body Laplacian is modified by the insertion of an external (confining) potential. Analogously, it is of great relevance and interest to insert an external magnetic field which the charged particles are coupled with; mathematically this is modeled, with minimal coupling, by replacing the kinetic part in $H_N$ with its magnetic counterpart
\[
\sum_{i=1}^N(-\Delta_\mathbf{A})_i:=\sum_{i=1}^N(-\ii\nabla_i+\mathbf{A}(x_i))^2,
\]
where $\mathbf{A}:\mathbb{R}^3\rightarrow\mathbb{R}^3$ is a vector potential. This would in turn imply the effective dynamics to be ruled by the magnetic Gross-Pitaevskii equation
\begin{equation} \label{eq:magneticGP}
\ii \de_tu_t= -\Delta_\mathbf{A} u_t+8\pi a |u_t|^2 u_t.
\end{equation}
The fact that an external magnetic field can be accommodated into the many-body dynamics, and that the one-body marginal can be controlled analogously to what is done when the one-particle operator is simply the negative Laplacian, is to be expected and indeed is mentioned explicitly in \cite[Remark 2.1]{p-external2015}. However, such an adaptation is not as straightforward as the analogous insertion of an external trapping potential: the magnetic Laplacian is formally the sum of the ordinary Laplacian plus a derivative term that is linear in the magnetic potential and a further quadratic term in the magnetic potential itself; this more complicated structure requires an a priori not immediate adjustment of a number of crucial estimates and steps in the main proof. For the related problem of derivation of the magnetic Hartree equation from many-body quantum dynamics, the reader should refer to \cite{l-2012}.

Before stating the result, let us define the magnetic Sobolev space $\text{H}^k_\mathbf{A}$ as the set of $u\in L^2$ such that
\[
\|u\|_{\text{H}^k_\mathbf{A}}^2=\sum_{0\le j\le k}\|(\nabla-\ii\mathbf{A})^ju\|_2^2<+\infty.
\]
We will consider the magnetic Hamiltonian
\[
H_{N,\mathbf{A}}:=-\sum_{i=1}^N\Delta_{i,\mathbf{A}}+\sum_{i<j}N^2V(N(x_i-x_j)),
\]
as the generator of the linear many-body Schr\"odinger dynamics. Moreover, we define the two $\mathbf{A}$-dependent energy functionals
\begin{equation} \label{eq:manybodyenergy}
\mathcal{E}_N(\psi_N):=\frac{1}{N}\langle\psi_N,H_{N,\mathbf{A}}\psi_N\rangle
\end{equation}
and
\begin{equation} \label{eq:gpenergy}
\mathcal{E}^{GP}(u):=\langle u,-\Delta_{\mathbf{A}}u\rangle+ 4\pi a \langle u, |u|^2 u \rangle.
\end{equation}
They represent the energies conserved along the flow of, respectively, the many-body Schr\"odinger equation and the magnetic Gross-Pitaevskii equation. We can now state the result as follows.

\begin{theorem} \label{thm:main}
	Let $V$ be a positive, $L^\infty$, spherically symmetric, and compactly supported function on $\mathbb{R}^3$, and let $\mathbf{A}\in W^{1,\infty}(\mathbb{R}^3,\mathbb{R}^3)$ be chosen such that $\nabla\cdot\mathbf{A}=0$.
	Suppose that the sequence of initial many-body states $\{\psi_{N,0}\}_{N\in\mathbb{N}}$ is condensed in the sense of reduced densities, i.e.,
	\[
	\lim_{N\rightarrow\infty}\gamma^{(1)}_{N,0}=|u_0\rangle\langle u_0|
	\]
	on a condensate wave-function $u_0\in {\text{\normalfont H}}^2_{\mathbf{A}}$ (here $\gamma^{(1)}_{N,0}$ is the one-particle reduced density matrix of $\psi_{N,0}$). Suppose in addition that
	\[
	\lim_{N\rightarrow\infty}\mathcal{E}_N(\psi_{N,0})=\mathcal{E}^{GP}(u_0).
	\]
	Then one has condensation for all $t>0$, that is
	\begin{equation} \label{eq:thesis}
	\lim_{N\rightarrow\infty}\gamma_{N,t}^{(1)}=|u_t\rangle\langle u_t|
	\end{equation}
	on a state $u_t$ that solves the magnetic Gross-Pitaevskii equation \eqref{eq:magneticGP} with initial datum $u_0$. Here $a$ is the scattering length of the interaction $V$.	
\end{theorem}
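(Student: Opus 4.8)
The plan is to run the ``projection counting'' scheme of \cite{p-external2015} with the magnetic one-body generator $-\Delta_{\mathbf{A}}$ in place of $-\Delta$. Introduce the time-dependent orthogonal projections $p_t:=|u_t\rangle\langle u_t|$ and $q_t:=\mathbbm{1}-p_t$ onto and off the evolving condensate, let $P_{k,t}$ project the bosonic $N$-body space onto the sector with exactly $k$ factors in the range of $q_t$, and for a suitable weight $m\colon\{0,\dots,N\}\to[0,1]$ form the counting operator $\widehat{m}_t:=\sum_k m(k)\,P_{k,t}$ together with
\[
\alpha_N(t):=\langle\psi_{N,t},\widehat{m}_t\,\psi_{N,t}\rangle+\bigl|\mathcal{E}_N(\psi_{N,t})-\mathcal{E}^{GP}(u_t)\bigr|.
\]
Since $\widehat{m}$ is comparable to the relative number of excited particles, $\alpha_N(t)\to 0$ is equivalent to the thesis \eqref{eq:thesis}, and the two hypotheses give $\alpha_N(0)\to 0$. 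The whole proof then reduces to a Grönwall estimate $\dot\alpha_N(t)\le C(t)\bigl(\alpha_N(t)+o_N(1)\bigr)$ with $\int_0^T C(t)\,\ud t<\infty$, the constant being controlled by $\|u_t\|_{\text{\normalfont H}^2_{\mathbf{A}}}$ and $\|\mathbf{A}\|_{W^{1,\infty}}$.

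The first step is to record how the field enters. In the Coulomb gauge $\nabla\cdot\mathbf{A}=0$ one has the pointwise identity $-\Delta_{\mathbf{A}}=-\Delta-2\ii\,\mathbf{A}\cdot\nabla+|\mathbf{A}|^2$, so relative to the non-magnetic problem the generator acquires a bounded zeroth-order term $|\mathbf{A}|^2\in L^\infty$, handled exactly like an external trapping potential, together with a genuinely new first-order term $-2\ii\,\mathbf{A}\cdot\nabla$. The analytic tool that makes the adaptation possible is the norm equivalence furnished by $\mathbf{A}\in L^\infty$: writing $-\Delta=\bigl((\nabla-\ii\mathbf{A})+\ii\mathbf{A}\bigr)^2$ one obtains, as quadratic forms,
\[
\langle\varphi,-\Delta\,\varphi\rangle\le 2\,\langle\varphi,-\Delta_{\mathbf{A}}\,\varphi\rangle+2\|\mathbf{A}\|_\infty^2\|\varphi\|_2^2,
\]
and symmetrically. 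This converts every a priori bound on the magnetic kinetic energy---which is what the conserved many-body energy controls---into a bound on the ordinary kinetic energy, and thereby imports wholesale the operator inequalities of \cite{p-external2015} that dominate the singular potential $N^2V(N\cdot)$ by the kinetic term. I note also that both energies in $\alpha_N$ are conserved along their respective flows, so the energy difference is in fact constant in time; it enters the estimates only as a fixed $o_N(1)$ datum, and only the counting part requires a genuine differential inequality.

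Next I would differentiate $\alpha_N$, using $\ii\,\dot p_t=[h^{GP}_t,p_t]$ with $h^{GP}_t:=-\Delta_{\mathbf{A}}+8\pi a|u_t|^2$. Through the magnetic Gross-Pitaevskii equation satisfied by $u_t$, the one-body magnetic part $\sum_i(-\Delta_{\mathbf{A}})_i$ of $H_{N,\mathbf{A}}$ is absorbed by the explicit time derivative $\partial_t\widehat{m}_t$ up to the finite-rank piece $[-\Delta_{\mathbf{A}},p_t]=|{-\Delta_{\mathbf{A}}u_t}\rangle\langle u_t|-|u_t\rangle\langle{-\Delta_{\mathbf{A}}u_t}|$, whose norm is finite because $u_t\in\text{\normalfont H}^2_{\mathbf{A}}$. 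What survives is, as in the non-magnetic case, the mismatch between the pair interaction $W:=N^2V(N\cdot)$ and the effective nonlinearity $8\pi a|u_t|^2$, organised into terms of the schematic form $\langle\psi,q_1q_2\,W_{12}\,p_1p_2\,\psi\rangle$ and $\langle\psi,q_1p_2\,W_{12}\,p_1p_2\,\psi\rangle$ and their conjugates. These are treated, after the short-scale scattering correction (the microscopic-structure step that replaces $W$ by its zero-energy renormalisation and produces the coupling $8\pi a$), by the transferred kinetic bounds together with the usual weight-shift identities for $\widehat{m}$. A pleasant point is that the microscopic-structure step is essentially insensitive to the field: since $\mathbf{A}$ is Lipschitz it is almost constant on the interaction scale $N^{-1}$ and may be locally gauged away, so the emergent scattering length---and hence $8\pi a$---is unchanged.

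The main obstacle is exactly the new first-order term $-2\ii\,\mathbf{A}\cdot\nabla$, which does not commute with $p_t,q_t$ and carries a derivative that must coexist with the concentrating potential. Two points demand care. First, in the energy estimate---the lemma asserting that closeness of the energies together with smallness of $\widehat{m}$ forces the excited particles to carry negligible kinetic energy---one must commute $\mathbf{A}\cdot\nabla$ through the projections; here $\nabla\cdot\mathbf{A}=0$ makes $\mathbf{A}\cdot\nabla$ anti-self-adjoint and gives the bounded commutator $[\mathbf{A}\cdot\nabla,p_t]=|\mathbf{A}\cdot\nabla u_t\rangle\langle u_t|+|u_t\rangle\langle\mathbf{A}\cdot\nabla u_t|$, with norm governed by $\|\mathbf{A}\|_\infty\|u_t\|_{\text{\normalfont H}^1_{\mathbf{A}}}$, while $\mathbf{A}\in W^{1,\infty}$ supplies the bounded coefficients produced by the integrations by parts. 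Second, every step that dominates $W_{12}$ by a derivative must be recast through the magnetic derivative: using the displayed form equivalence one replaces the bare $\nabla$ by $(\nabla-\ii\mathbf{A})$ acting on the excited factor and absorbs it into the magnetic kinetic energy that the energy estimate controls. Once these magnetic commutators and cross terms are shown to be $O(\alpha_N)+o_N(1)$, the Grönwall inequality closes as in \cite{p-external2015}; propagation of $\text{\normalfont H}^2_{\mathbf{A}}$-regularity along the magnetic Gross-Pitaevskii flow supplies the integrable constant $C(t)$, and $\alpha_N(t)\to0$ yields \eqref{eq:thesis}.
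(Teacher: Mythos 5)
Your skeleton --- counting functional plus energy difference, a Gr\"onwall estimate, the norm equivalence $\|\cdot\|_{\text{H}^k}\simeq\|\cdot\|_{\text{H}^k_{\mathbf{A}}}$ from $\mathbf{A}\in W^{1,\infty}$, and the exact cancellation of the one-body magnetic Laplacians in the commutator --- matches the paper. But there is a genuine gap at the heart of the argument: your functional
\[
\alpha_N(t)=\langle\psi_{N,t},\widehat{m}_t\psi_{N,t}\rangle+\bigl|\mathcal{E}_N(\psi_{N,t})-\mathcal{E}^{GP}(u_t)\bigr|
\]
is missing the third term that the paper builds into $\alpha_{N,t}$, namely
\[
-N(N-1)\,\mathrm{Re}\,\langle\psi_{N,t},\,g_\beta(x_1-x_2)\,\widehat{r}\,\psi_{N,t}\rangle,
\]
with $g_\beta=1-f_\beta$ and $f_\beta$ the zero-energy scattering solution for $V_N-W_\beta$. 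In the Gross-Pitaevskii scaling this term is not an optional refinement. If you differentiate the counting term alone, the interaction terms that survive have $N^2V(N\cdot)$ tested against $p_1p_2$, and the effective coupling these produce is the Born approximation $\int V$, not $8\pi a$; the discrepancy is $O(1)$, so your Gr\"onwall inequality cannot close against the nonlinearity $8\pi a|u_t|^2$. You do acknowledge that a ``microscopic-structure step'' is needed, but you give no mechanism for it: in the projection-counting scheme the mechanism \emph{is} the extra term in $\alpha$, whose time derivative cancels $(V_N-W_\beta)f_\beta$ exactly and replaces the singular $V_N$ (order $N^2$) by the soft $W_\beta$ (order $N^{3\beta-1}$), at the price of additional error terms.

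Because your $\alpha_N$ contains no $g_\beta$, you also never meet the one place where the magnetic field genuinely changes the proof: the commutator $[H_{N,\mathbf{A}},g_\beta(x_1-x_2)]$. Using $\nabla\cdot\mathbf{A}=0$ and the scattering equation, this equals the non-magnetic expression plus the new terms $-2\ii\mathbf{A}(x_1)(\nabla_{x_1}g_\beta)-2\ii\mathbf{A}(x_2)(\nabla_{x_2}g_\beta)$, which give rise to the additional summand
\[
\gamma_l=N^2\bigl|\langle\psi_N,\,\nabla_{x_1}g_\beta(x_1-x_2)\,\mathbf{A}(x_1)\,\widehat{r}\,\psi_N\rangle\bigr|
\]
in $\partial_t\alpha_{N,t}$. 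Estimating $\gamma_l$ --- by integrating by parts in $x_2$, applying H\"older and Sobolev inequalities, and using the smallness of $\|g_\beta\|_2$ and $\|g_\beta\|_{3/2}$ --- is the main new technical content of the paper (its Lemma 2), and nothing in your proposal substitutes for it. Your remark that $\mathbf{A}$ is almost constant on the interaction scale and ``may be locally gauged away,'' so that the scattering step is insensitive to the field, is a heuristic, not an argument: it is true that the scattering length is unchanged, but the field does produce $\gamma_l$, and that term must be proved to be $O(N^{-\eta})$ rather than wished away. Your discussion of commuting $\mathbf{A}\cdot\nabla$ through the projections addresses a secondary adaptation (the magnetic version of the kinetic-energy lemma), not this one.
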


We remark that our hypotheses on $\mathbf{A}$ certainly ensures that $\|\cdot\|_{\text{H}^k_{\mathbf{A}}}$ is equivalent to the standard Sobolev norm $\|\cdot\|_{\text{H}^k}$ for $k\in\{0,1,2\}$; indeed, for any $f\in \text{H}^2$, one has
\[
\|\Delta_\mathbf{A} f\|_2\lesssim \|\Delta f\|_2+\|\mathbf{A}\|_\infty\|\nabla f\|_2+\|\mathbf{A}\|_\infty^2\|f\|_2\lesssim \|f\|_{\text{H}^2}
\]
and, for any $f\in \magnh^2_\mathbf{A}$,
\[
\begin{split}
\|\Delta f\|_2\lesssim&\|\Delta_\mathbf{A}f\|_2+\|\mathbf{A}\|_\infty\|\nabla f\|_2+\|\mathbf{A}\|_\infty^2\|f\|_2.
\end{split}
\]
Since $\|\nabla f\|_2\lesssim \epsilon \|\Delta f\|_2+1/\epsilon\|f\|_2$ for any $\epsilon>0$, by choosing $\epsilon>0$ small enough one gets $\|f\|_{\text{H}^2}\lesssim\|f\|_{\text{H}^2_\mathbf{A}}$. The cases $k=0$ and $k=1$ follow trivially.

We also stress that, again due to the hypotheses $\mathbf{A}\in W^{1,\infty}$ and $\nabla\cdot\mathbf{A}=0$, the global existence of solution to the magnetic Gross-Pitaevskii equation \eqref{eq:magneticGP} in the magnetic Sobolev spaces up to $k=2$ is granted due to standard arguments. It would be of great interest to find a larger class of vector potentials such that a result similar to Theorem \ref{thm:main} holds: for example, a constant magnetic field $\mathbf{B}=\nabla\times \mathbf{A}$ is not attainable by $\mathbf{A}\in W^{1,\infty}$.

An interesting future outlook is the derivation of the magnetic Gross-Pitaevskii equation  for time-dependent magnetic potentials $\mathbf{A}(t)$. Since the treatment in \cite{p-external2015} already deals with time-dependent external (electric) fields, it is expected that such result could be extended to cover a suitable class of $\mathbf{A}(t)$ having enough space and time regularity.

\section{Proof of Theorem \ref{thm:main}}
Theorem \ref{thm:main} is proven with the same strategy as Theorem 2.1 in \cite{p-external2015}. The crucial quantity one wants to control is
\begin{equation} \label{eq:alpha}
\alpha_{N,t}:=\langle\psi_N,\widehat{m}\psi_N\rangle+|\mathcal{E}_N(\psi_N)-\mathcal{E}^{GP}(u)|-N(N-1)\text{Re}\langle\psi_N,g_\beta(x_1-x_2)\widehat{r}\psi_N\rangle.
\end{equation}
For the definition of $\widehat{m}$ and $\widehat{r}$ in \eqref{eq:alpha} see \cite[Def. 6.1 and Def. 6.2]{p-external2015}. The definition of $g_\beta$ is recalled in eq. \eqref{eq:def_g}, since its role is slightly modified by the presence of $\mathbf{A}$. The core of the proof is to look for an estimate of the form
\begin{equation} \label{eq:gronwall}
\partial_t\alpha_{N,t}\le C(t)\Big(\langle\psi_N,\widehat{m}\psi_N\rangle+|\mathcal{E}_N(\psi_N)-\mathcal{E}^{GP}(u)| +N^{-\eta}\Big)
\end{equation}
for some $\eta>0$. By Gr\"onwall Lemma, this is enough to get \eqref{eq:thesis} (see \cite[Sect. 6]{p-external2015} for details). The factor $C(t)$, which varies from step to step during the proof, represents a function depending on the magnetic Sobolev norms  $\|\psi_{N,t}\|_{\text{H}_\mathbf{A}^1}$ and $\|u_t\|_{\text{H}_\mathbf{A}^2}$; for this reason, it is in general exponentially growing in time, but not $N$-dependent.

Computing the time-derivative of $\alpha_{N,t}$ one gets 
\begin{equation}
\de_t\alpha_{N,t}\le \gamma_b+\gamma_c+\gamma_d+\gamma_e+\gamma_f+\gamma_l,
\end{equation}
where the terms $\gamma_j$, $j\in\{b,c,d,e,f\}$ are defined in \cite[Def. 6.6]{jlp-two2016} and \cite[Def. 6.3]{p-external2015}, while the new summand
\begin{equation}\label{eq:gammal}
\gamma_l:=N^2\big|\langle\psi_N,\,\nabla_{x_1}g_\beta(x_1-x_2)\,\mathbf{A}(x_1) \widehat{r}\,\psi_N \rangle\big|
\end{equation}	
emerges in our case due to the presence of $\mathbf{A}$; let us remark that for us $\gamma_a=0$ since we are not considering external traps.

In \cite[Appendix A.2]{p-external2015} it is shown in detail how $\gamma_j$, $j\in\{b,c,d,e,f\}$ (see \cite[Sect. 6.4]{jlp-two2016} for the estimate of $\gamma_f$) can be bounded in terms of $\langle\psi_N,\widehat{m}\psi_N\rangle$, $|\mathcal{E}_N(\psi_N)-\mathcal{E}^{GP}(u)|$ and $N^{-\eta}$, in order to obtain \eqref{eq:gronwall}. We report in what follows the main adaptations needed in the magnetic case for the treatment presented in \cite[Appendix A.2]{p-external2015}, plus the estimate of the additional term $\gamma_l$.

\subsection{Cancellation of the kinetic part}
A remarkable feature of the counting method we are considering here (introduced in \cite{p-simple2011} and \cite{kp-singular2010}) is that the single-particle terms in $H_N$ (among them the kinetic part) get canceled exactly when computing $\de_t \alpha_{N,t}$; in \cite{p-external2015}, this happens in Lemma 6.2 and it occurs in the case of $-\Delta_\mathbf{A}$ as well. More precisely, when computing $\de_t\langle\psi_N,\widehat{m}\psi_N\rangle$, one has
\[
\de_t\langle\psi_N,\widehat{m}\psi_N\rangle=\ii\Big\langle\psi_N,\Big[H_{N,\mathbf{A}}-\sum_{i=1}^N(-\Delta_{\mathbf{A},x_i}+8\pi a |u|^2_i),\widehat{m}\Big]\psi_N\Big\rangle,
\]
and one easily sees that the magnetic Laplacians get exactly canceled. This cancellation is the reason why, in the less involved mean-field case considered in \cite{kp-singular2010}, not much needs be done to deal with magnetic Laplacians. Apart from technical assumptions, all the proof proceeds in the same way since $-\magn$ does not play a role. In the Gross-Pitaevskii regime however, even though the cancellation takes place and the kinetic part does not have to be directly estimated, nonetheless $-\magn$ still plays a role along the proof through the emergence of the energy difference $|\mathcal{E}_N(\psi_N)-\mathcal{E}^{GP}(u)|$.

\subsection{Cancellation of $V_N-W_\beta$} \label{subsect:scattering}
In analogy to the other known derivations of the Gross-Pitaevskii equation, one needs to include in the treatment a function displaying some short-scale structure that allows one to weaken the strong singularity of the interaction term $N^2V(N\cdot)$. This is done by means of the solution $f_\beta$ to the zero-energy scattering problem relative to the modified potential $V_N-W_\beta$, where $W_\beta$ is the less singular potential introduced in \cite[Sect. 5]{p-external2015} so as to make $V_N-W_\beta$ have zero scattering length. $f_\beta$ is thus the solution to
\begin{equation}\label{eq:scattering}
\Big(-\Delta+\frac{1}{2}(V_N-W_\beta)\Big)f_\beta=0,
\end{equation}
with $f_\beta\rightarrow1$ for $|x|\rightarrow\infty$. The function $g_\beta$ that appears in \eqref{eq:alpha} is defined as
\begin{equation} \label{eq:def_g}
g_\beta:=1-f_\beta.
\end{equation}

As explained in \cite[Sect. 6.2]{p-external2015}, the function $g_\beta$ plays a crucial role in the replacement of the strong potential $V_N$, which is of order $N^2$ at short distances, with the softer $W_\beta$, which is instead of order $N^{3\beta-1}$; this is of course at the expense of the appearance of their difference, but this can be shown to disappear exactly modulo terms that can be estimated. Performing all calculations for $\de_t\alpha_{N,t}$ in the magnetic case, one gets as already mentioned the terms $\gamma_b$ to $\gamma_f$ as appearing in \cite[Def. 6.3]{p-external2015} and \cite[Def 6.6]{jlp-two2016}; however, when computing $[H_N,g_\beta(x_1-x_2)]$ as one can find after \cite[Eq. 6.17]{p-external2015}, one gets
\begin{equation}
\begin{split}
[H_N,g_\beta(x_1-x_2)]=&[\Delta_{\mathbf{A},x_1}+\Delta_{\mathbf{A},x_2}, f_\beta(x_1-x_2)]\\
=&(V_N-W_\beta)f_\beta(x_1-x_2)-2(\nabla_{x_1} g_\beta(x_1-x_2))\nabla_{x_1}\\
&-2(\nabla_{x_2} g_\beta(x_1-x_2))\nabla_{x_2}-2\ii\mathbf{A}(x_1)(\nabla_{x_1}g_\beta(x_1-x_2))\\
&-2\ii\mathbf{A}(x_2)(\nabla_{x_2}g_\beta(x_1-x_2)),
\end{split}
\end{equation}
having used $\nabla\cdot\mathbf{A}=0$ and \eqref{eq:scattering}. The terms containing $(\nabla g_\beta)\nabla$ are present in \cite{p-external2015} too, and they provide the term $\gamma_c$. The terms containing $\mathbf{A}$ were instead not present in the purely kinetic case, and they exactly correspond to $\gamma_l$.

\subsection{Adapting the estimates}

To get the desired estimate \eqref{eq:gronwall} one has to treat separately $\gamma_b$, $\gamma_c$, $\gamma_d$, $\gamma_e$, $\gamma_f$, $\gamma_l$. The calculations proceed exactly as in \cite[Appendix A.2]{p-external2015}, with some modifications we describe here.

\subsubsection{Insertion of $h_{\beta_1,\beta}$}

Lemma A.4 in \cite{p-external2015} is used to prove the bound for $\gamma_b$ and in its proof (to treat the term of type III for small $\beta$ and of type I, II and III for arbitrary $\beta$) one replaces $V_\beta$ with $U_{\beta_1,\beta}+\Delta h_{\beta_1,\beta}$; for example, one has (see \cite[proof of Lemma A.4 (3), for $\beta$ small]{p-external2015})
\[
\begin{split}
N^2\big|\langle\psi_N, q_1p_2V_\beta(x_1-x_2)\widehat {m} q_1 q_2\psi_N\rangle\big|\le&\,N^2\big|\langle\psi_N, q_1p_2U_{0,\beta}(x_1-x_2)\widehat {m} q_1 q_2\psi_N\rangle\big| \\
&+N^2\big|\langle\psi_N, q_1p_2(\Delta_1h_{0,\beta}(x_1-x_2))\widehat {m} q_1 q_2\psi_N\rangle\big|
\end{split}
\]
The first summand can be bounded easily, since $U_{0,\beta}$ is less singular than $V_\beta$. To treat the second summand, the strategy is then to integrate by parts $\Delta h_{\beta_1,\beta}$ once or twice and then to manipulate the outcome in order to obtain the Sobolev norms of $\Psi_{N,t}$ or $u_t$. This procedure can be adapted to the magnetic case since one can use the trivial relation
\[
\nabla=\nabla_\mathbf{A}+\ii\mathbf{A},
\]
which allows to get a magnetic gradient at the expense of a $L^\infty$-bounded term.
This allows to bound the second summand by
\begin{align}
&N^2\big|\langle\nabla_{1,\mathbf{A}}q_1p_2\psi_N, (\nabla_{1} h_{0,\beta}(x_1-x_2))\widehat {m} q_1 q_2\psi_N\rangle\big| \label{eq:third_term_1}\\
&+ N^2\big|\langle\psi_N, q_1p_2(\nabla_{1} h_{0,\beta}(x_1-x_2))\nabla_{1,\mathbf{A}}\widehat {m} q_1 q_2\psi_N\rangle\big|\label{eq:third_term_2}\\
&+N^2\big|\langle\psi_N, q_1p_2\mathbf{A}(x_1)(\nabla_{1} h_{0,\beta}(x_1-x_2))\widehat {m} q_1 q_2\psi_N\rangle\big|. \label{eq:third_term_3}
\end{align}
At this point one can repeat the computations performed in \cite{p-external2015} to bound the terms (A.14) to (A.17), the only difference being that $\nabla_\mathbf{A}$ will produce magnetic norms in the estimates of \eqref{eq:third_term_1} and \eqref{eq:third_term_2}; \eqref{eq:third_term_3} is even less singular, since it contains only one derivative, and it can again be bounded by repeating the bounds for \cite[Eq. A.14 to A.17]{p-external2015}.

\subsubsection{Magnetic norms}

The Sobolev norms $\|\psi_{N,t}\|_{\magnh^1}$ or $\|u_t\|_{\magnh^k}$ with $k=1,2$ emerge frequently along the proof, not only due to the integration by parts of $\Delta h_{\beta_1,\beta}$, but also typically by a Sobolev embedding argument (see e.g. \cite[Eq. A.37 and A.15]{p-external2015}), or due to \cite[Prop. A.3]{p-external2015}. While in the non-magnetic case, such terms are bounded by some $N$-independent function of time, in the case of $\mathbf{A}\ne0$ one needs to use the inequality $\|\cdot\|_{\magnh^k}\leqslant\,C\,\|\cdot\|_{\magnh^k_\mathbf{A}}$ granted by the equivalence of the two norms for $k=1,2$. Then, by general facts about magnetic Schr\"odinger equations, the two norms $\|\psi_{N,t}\|_{\magnh^1_\mathbf{A}}$ and $\|u_t\|_{\magnh^1_\mathbf{A}}$ are uniformly bounded in time. The magnetic Sobolev norm $\|\cdot\|_{\text{H}_\mathbf{A}^2}$ is instead not a priori bounded, but the $W^{1,\infty}$-boundedness of $\mathbf{A}$ allows to get
\[
\|u_t\|_{\text{H}_\mathbf{A}^2}\le D e^{K|t|},
\] in the same way as for the non-magnetic case. The norm $\| u_t\|_\infty$ often appears as well, typically every time \cite[Lemma 4.1 (5)]{p-external2015} is used; $\| u_t\|_\infty$ can of course be bounded by $\|u_t\|_{\text{H}^2}$ by standard embedding arguments, and hence by $C\,\|u_t\|_{\text{H}^2_\mathbf{A}}$ again by equivalence of norms.

\subsubsection{Lemma 5.2 of \cite{p-external2015}}

Lemma 5.2 in \cite{p-external2015} allows one to bound a part of the kinetic energy by means of the functional $\alpha_{N,t}$ and $N^{-\eta}$; it plays a role in the estimate of the term of type III in Lemma A.4 of \cite{p-external2015} and in the bound of $\gamma_d$ \cite[pages 39 through 41]{p-external2015}. It still holds in our case, with the substitution $\nabla\mapsto \nabla_{\mathbf{A}}$ and with the appropriate magnetic energy functionals defined in \eqref{eq:manybodyenergy} and \eqref{eq:gpenergy}. In the proof (see \cite[Appendix A.3]{p-external2015}), one has exactly all the magnetic analogous of the terms \cite[Eqs. A.53 to A.60]{p-external2015}.
The term corresponding to \cite[Eq. A.54]{p-external2015} can be bounded by
\[
\begin{split}
|\langle\nabla_{1,\mathbf{A}}q_1\psi_N,\mathbb{I}_{\mathcal{A}_1}\nabla_{1,\mathbf{A}}p_1\psi_N\rangle| \le& \, |\langle\nabla_{1,\mathbf{A}}q_1\psi_N,\nabla_{1,\mathbf{A}}p_1\psi_N\rangle| \\&+|\langle\nabla_{1,\mathbf{A}}q_1\psi_N,\mathbb{I}_{\overline{\mathcal{A}_1}}\nabla_{1,\mathbf{A}}p_1\psi_N\rangle|  \\
\le&\, |\langle \widehat{n}^{-1/2}q_1\psi_N,\Delta_{1,\mathbf{A}}\widehat{n}^{1/2}_1p_1\psi_N\rangle|\\
&+\|\mathbb{I}_{\overline{\mathcal{A}}_1}\|_{op} \|\nabla_{1,\mathbf{A}}q_1\psi_N\|\|\nabla_{1,\mathbf{A}} p_1\|_{op} \\
\le &\,C(t) \Big(\langle\psi_N,\widehat{n}\psi_N\rangle+ N^{-\eta}\Big),
\end{split}
\]
having used \cite[Lemma 4.1 (3)]{p-external2015} as well as the fact that $\widehat{n}^{-1/2}$ is well defined on $\text{Ran}\,q_1$ for the second step and \cite[Prop. A.1 (2)]{p-external2015} for the third one. Here $\mathbb{I}_{\mathcal{A}_1}$ is the characteristic function of the set $\mathcal{A}_1$ defined in \cite[Def. 5.2]{p-external2015}, while $C(t)$ is a function depending on the magnetic Sobolev norm $\|u_t\|_{\text{H}_\mathbf{A}^2}$. With similar arguments one can bound the magnetic analogous of \cite[Eq. A.59]{p-external2015}, i.e.,
\[
\|\mathbb{I}_{\mathcal{A}_1}\nabla_{1,\mathbf{A}}p_1\psi_N\|^2-\|\nabla_{1,\mathbf{A}}u\|^2,
\]
and this is enough to get the thesis of \cite[Lemma 5.2]{p-external2015} (the interaction terms are of course unmodified by the insertion of $\mathbf{A}$).

\subsubsection{Bound on $\gamma_l$}
We show here how the term $\gamma_l$ defined in \eqref{eq:gammal} can be estimated in order to get \eqref{eq:gronwall}.
\begin{lemma} \label{lemma:gamma}
	There exists $\eta>0$ such that
	\[
	\gamma_l\leqslant C(t)\, N^{-\eta}
	\]
	for a function $C(t)$ depending on $\|u_t\|_{\text{H}_\mathbf{A}^2}$ but not on $N$.
\end{lemma}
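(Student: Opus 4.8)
The plan is to use that $\gamma_l$ is, up to the insertion of a bounded factor, one of the terms already treated in \cite{p-external2015}: comparing \eqref{eq:gammal} with the commutator $[H_N,g_\beta(x_1-x_2)]$ computed above, one sees that $\gamma_l$ has exactly the structure of $\gamma_c$ (the contribution of the $(\nabla g_\beta)\nabla$ terms), with the sole difference that the free derivative $\nabla_{x_1}$ acting to the right is replaced by the multiplication operator $\mathbf{A}(x_1)$. Since $\mathbf{A}\in W^{1,\infty}$, the quantity $m(x_1,x_2):=\nabla g_\beta(x_1-x_2)\cdot\mathbf{A}(x_1)$ is a scalar multiplication operator with $|m|\le\|\mathbf{A}\|_\infty\,|\nabla g_\beta(x_1-x_2)|$. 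The first step is therefore to write $m=|m|^{1/2}(m/|m|)|m|^{1/2}$ and apply Cauchy--Schwarz, obtaining
\[
\gamma_l=N^2\big|\langle\psi_N,m\,\widehat r\,\psi_N\rangle\big|\le\|\mathbf{A}\|_\infty\,N^2\,\big\||\nabla g_\beta(x_1-x_2)|^{1/2}\psi_N\big\|\;\big\||\nabla g_\beta(x_1-x_2)|^{1/2}\widehat r\,\psi_N\big\|.
\]
In this way the constant $\|\mathbf{A}\|_\infty$ is extracted once and for all, and the two surviving factors are precisely those met in the estimate of $\gamma_c$ in \cite[Appendix A.2]{p-external2015}, except that the right one carries $\widehat r\,\psi_N$ in place of $\nabla_{x_1}\widehat r\,\psi_N$.

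For the left factor I would use the H\"older--Sobolev inequality $\langle\psi_N,|\nabla g_\beta(x_1-x_2)|\psi_N\rangle\le\|\nabla g_\beta\|_{3/2}\,\|\nabla_{x_1}\psi_N\|_2^2$; the non-magnetic gradient is then converted through $\nabla=\nabla_\mathbf{A}+\ii\mathbf{A}$ and controlled by the one-particle magnetic kinetic energy $\|\nabla_{1,\mathbf{A}}\psi_N\|_2^2=\langle\psi_N,(-\Delta_{1,\mathbf{A}})\psi_N\rangle\le\mathcal{E}_N(\psi_N)$ (finite since $V\ge0$ and uniformly bounded in $N$ by the energy hypothesis), while the scattering estimates of \cite[Sect. 5]{p-external2015} supply the smallness of $\|\nabla g_\beta\|_{3/2}$. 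The decisive factor is the right one, where the depletion encoded in the projections $q$ contained in $\widehat r$ (see \cite[Def. 6.2]{p-external2015}) must be exploited to produce the extra negative power of $N$: unfolding $\widehat r$ into its $p/q$ components with the $\widehat n$-weights and estimating each piece with the scattering bounds of \cite[Sect. 5]{p-external2015} and, where needed, the intermediate-scale integration by parts through $h_{\beta_1,\beta}$ described above, one reproduces the bounds that control $\gamma_c$. Because $\mathbf{A}$ contributes only the constant $\|\mathbf{A}\|_\infty$ and not a derivative, none of these steps has to be closed against the kinetic energy via \cite[Lemma 5.2]{p-external2015}; this is exactly why no residual $\langle\psi_N,\widehat m\,\psi_N\rangle$ survives and the bound collapses to the pure remainder $C(t)\,N^{-\eta}$.

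The main obstacle is the competition between the prefactor $N^2$ and the near-criticality of $\nabla g_\beta$: a naive Cauchy--Schwarz that discards the $q$-projections overcounts and returns a bound growing in $N$, so the depletion structure of $\widehat r$ (together with the passage to the milder scale $\beta_1<\beta$, when needed) is indispensable for harvesting the missing power. A secondary, purely bookkeeping point is to check that inserting $\mathbf{A}(x_1)$ does not disturb the weight manipulations of \cite[Appendix A.2]{p-external2015}: $\mathbf{A}(x_1)$ commutes with the $\widehat n$-weights and with $p_2,q_2$, while its failure to commute with $p_1,q_1$ is harmless since every such commutator is bounded in operator norm by $\|\mathbf{A}\|_\infty$. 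Once these points are settled one reads off $\gamma_l\le C(t)\,N^{-\eta}$ with $C(t)$ depending only on $\|u_t\|_{\magnh^2_\mathbf{A}}$, as claimed.
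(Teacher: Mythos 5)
Your proposal takes a genuinely different route from the paper, and as written it contains gaps that are not merely cosmetic. The paper never estimates $\nabla g_\beta$ at all: it uses $\nabla_{x_1}g_\beta(x_1-x_2)=-\nabla_{x_2}g_\beta(x_1-x_2)$ to integrate by parts in $x_2$, so that the derivative lands on $\psi_N$ and on $\widehat{r}\,\psi_N$ and only the \emph{undifferentiated} $g_\beta$ survives; the two resulting terms are then controlled by $\|g_{12}p_1\|_{op}\le\|u\|_\infty\|g_\beta\|_2$ and by a H\"older ($L^{3/2}$--$L^6$--$L^6$) plus Sobolev argument in the $x_2$ variable involving $\|g_\beta\|_{3/2}$, all of which are quotable from \cite[Lemma 5.1]{p-external2015} together with the weight bounds $\|\widehat{m}^{\,a}\|_{op}+\|\widehat{m}^{\,b}\|_{op}\lesssim N^{-1+\xi}$ of \cite[Eq. 6.11]{p-external2015}. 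Your symmetric Cauchy--Schwarz splitting of $|\nabla g_\beta|$ forecloses this device (one cannot integrate $|\nabla g_\beta|^{1/2}$ by parts), and therefore forces you to control $L^p$ norms of $\nabla g_\beta$ itself. That is your first gap: you attribute the smallness of $\|\nabla g_\beta\|_{3/2}$ (and, implicitly in the right factor, of $\|\nabla g_\beta\|_1$) to \cite[Sect. 5]{p-external2015}, but the scattering estimates there concern $g_\beta$, not its gradient. Bounds such as $|\nabla f_\beta|\lesssim (N|x|^2)^{-1}$ would have to be derived anew from \eqref{eq:scattering}; this is plausible but it is precisely the kind of estimate the lemma must supply, so it cannot be waved through by citation.

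The second gap is the treatment of the decisive right factor $\||\nabla g_\beta|^{1/2}\widehat{r}\,\psi_N\|$. Your plan --- ``unfolding $\widehat{r}$ \ldots one reproduces the bounds that control $\gamma_c$'' --- is circular: the $\gamma_c$ bounds in \cite[Appendix A.2]{p-external2015} are themselves obtained by the integration by parts in $x_2$ that your splitting has made unavailable, so there is nothing there to reproduce. What actually closes your argument is the observation that every summand of $\widehat{r}=p_1p_2\widehat{m}^{\,b}+(p_1q_2+q_1p_2)\widehat{m}^{\,a}$ carries at least one projection $p_i$, so that $\||\nabla g_{12}|^{1/2}p_i\,\phi\|^2\le\|p_i|\nabla g_{12}|p_i\|_{op}\|\phi\|^2\le\|u\|_\infty^2\|\nabla g_\beta\|_1\|\phi\|^2$; combined with $\|\widehat{m}^{\,a}\|_{op}+\|\widehat{m}^{\,b}\|_{op}\lesssim N^{-1+\xi}$ and the gradient bounds above, the count gives $N^2\cdot N^{-1/2+\epsilon}\cdot N^{-1/2-\beta/2}\cdot N^{-1+\xi}=N^{-\beta/2+\epsilon+\xi}$, which would indeed prove the lemma. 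But this key operator-norm estimate never appears in your text. Relatedly, your diagnosis of where the negative power of $N$ comes from is off: it is not ``the depletion encoded in the projections $q$'' (the $q$'s enter only through $\|q\|_{op}\le 1$), but rather the $p$'s (through $\|u\|_\infty$ and $\|\nabla g_\beta\|_1$) and the smallness of the weights $\widehat{m}^{\,a},\widehat{m}^{\,b}$. So the route you sketch can be completed, but only after supplying the two estimates on which it must rest, neither of which is in the proposal or in the cited references.
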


\begin{proof}
	We recall that
	\[
	\widehat{r}:=p_1p_2\widehat{m}^{\,b}+(p_1q_2+q_1p_2)\widehat{m}^{\,a},
	\]
	where $\widehat{m}^{\,b}$ and $\widehat{m}^{\,a}$ are in \cite[Def. 6.2]{p-external2015}. By symmetry of $g_\beta$, we can integrate by parts in the $x_2$ variable; we get
	\begin{equation}
	\begin{split} \label{eq:gammaestimate}
	|\gamma_l|\leqslant& N^2\big|\langle\nabla_{x_2}\psi_N,\,g_\beta(x_1-x_2)\,\mathbf{A}(x_1) \widehat{r}\,\psi_N \rangle\big|\\&+N^2\big|\langle\psi_N,\,g_\beta(x_1-x_2)\,\mathbf{A}(x_1)\nabla_{x_2} \widehat{r}\,\psi_N \rangle\big|.
	\end{split}
	\end{equation}
	We can use the definition of $\widehat{r}$ for the first term and get
	\[
	N^2\big|\langle\nabla_{x_2}\psi_N,\,g_{12}\,\mathbf{A}(x_1) \widehat{r}\,\psi_N \rangle\big|\le N^2\|\nabla_2\psi_N\| \|\mathbf{A}\|_\infty\|g_{12} p_1\|_\infty (\|\widehat{m}^{\,a}\|_{op}+\|\widehat{m}^{\,b}\|_{op}),
	\]
	having used the short-hand notation $g_{12}:=g_\beta(x_1-x_2)$. Now, by \cite[Lemma 4.1]{p-external2015}, \cite[Lemma 5.1]{p-external2015} and \cite[Eq 6.11]{p-external2015}, one gets
	\[
	N^2\big|\langle\nabla_{x_2}\psi_N,\,g_{12}\,\mathbf{A}(x_1) \widehat{r}\,\psi_N \rangle\big|\leqslant C(t)\, N^{1+\xi}\|\psi_N\|_{H^1_\mathbf{A}}\|g_\beta\|\leqslant C(t)N^{-\beta/2+\xi},
	\]
	for some $\xi>0$ to be chosen suitably small. Here we used the uniform boundedness of the first magnetic Sobolev norm $\|\psi_N\|_{\text{H}_\mathbf{A}^1}$ and the fact that $\|u_t\|_\infty$, produced by \cite[Lemma 4.1]{p-external2015}, is bounded by $C\,\|u_t\|_{\text{H}_\mathbf{A}^2}$.
	
	As for the second term in \eqref{eq:gammaestimate}, we can remark that two summands of $\widehat{r}$ contain $p_1$, and their sum is equal to $p_1\widehat{r}$. For them, one can use H\"older inequality in the variable $x_2$ and then Sobolev inequality again in the variable $x_2$ to get
	\[
	\begin{split}
	N^2\big|\langle\psi_N,\,g_{12}\,\mathbf{A}(x_1)&\nabla_{x_2} \,p_1\,\widehat{r}\,\psi_N \rangle\big|\leqslant  \,N^2 \int d^3x_1\,d^3 x_3\dots d^3x_N\|g_\beta(x_1-\cdot)\|_{3/2}\\
	& \times\|\psi_N(x_1,\cdot,x_3\dots x_N)\|_6\|\mathbf{A}(x_1)(\nabla\,p_1\,\widehat{r}\,\psi_N)(x_1,\,\cdot\,,x_3\dots x_N)\|_6\\
	&\qquad\qquad\quad\leqslant  \,N^2 \|g_\beta\|_{3/2}\|\mathbf{A}\|_\infty\int d^3x_1\,d^3 x_3\dots d^3x_N \\
	&\times \|\nabla\psi_N(x_1,\,\cdot\,,x_3\dots x_N)\|\|(\Delta\,p_1\,\widehat{r}\,\psi_N)(x_1,\,\cdot\,,x_3\dots x_N)\|\\
	\leqslant & \,C(t)\,N^2 \|\psi_N\|_{H^1_\mathbf{A}}\|g_\beta\|_{3/2}\|\Delta u\|(\|\widehat{m}^{\,a}\|_{op}+\|\widehat{m}^{\,b}\|_{op}),
	\end{split}
	\]
	having used in the last step the definition of $\widehat{r}$, the fact that $\|\Delta p\|_{op}=\|\Delta u\|_2$ and \cite[Cor. 4.1]{p-external2015}. By interchanging the roles of $x_1$ and $x_2$, the same estimate can be proven if $q_1\widehat{r}$ replaces $p_1\widehat{r}$. One can now use $\|\Delta u\|\leqslant C\|u\|_{H^2_{\mathbf{A}}}$, \cite[Lemma 5.1]{p-external2015} (plus a standard interpolation argument to obtain $\|g_\beta\|_{3/2}\le \|g_\beta\|_2^{2/3}\,\|g_\beta\|_1^{1/3}\le C\,N^{-1-\beta_1}$) and \cite[Eq. 6.11]{p-external2015} and get
	\[
	N^2\big|\langle\psi_N,\,g_{12}\,\mathbf{A}(x_1)\nabla_{x_2} \widehat{r}\,\psi_N \rangle\big|\leqslant C(t)\, N^{-\beta+\xi},
	\]
	which is enough to get the thesis.
\end{proof}

\begin{acknowledgement}
{Partially supported by the 2014-2017 MIUR-FIR grant ``\emph{Cond-Math: Condensed Matter and Mathematical Physics}'', code RBFR13WAET and by Gruppo Nazionale per la Fisica Matematica (GNFM-INdAM). The author also warmly thanks the GSSI, for the kind hospitality and financial support during a visit in L'Aquila.}
\end{acknowledgement}
%

\end{document}